\documentclass[sigconf]{acmart}
\usepackage{algorithm}
\usepackage{algorithmic}
\usepackage{dsfont}
\usepackage{makecell}
\usepackage{enumitem}
\usepackage{multirow}
\usepackage{booktabs}
\usepackage{geometry}
\usepackage{subcaption}
\usepackage{amsmath}
\AtBeginDocument{%
  }

\begin{document}

%%
%% The "title" command has an optional parameter,
%% allowing the author to define a "short title" to be used in page headers.
\title{TFPS: A Temporal Filtration-enhanced Positive Sample Set Construction Method for Implicit Collaborative Filtering}

\renewcommand{\shortauthors}{Trovato et al.}

%%
%% The abstract is a short summary of the work to be presented in the
%% article.
\begin{abstract}
  The negative sampling strategy can effectively train collaborative filtering (CF) recommendation models based on implicit feedback by constructing positive and negative samples. However, existing methods primarily optimize the negative sampling process while neglecting the exploration of positive samples. Some denoising recommendation methods can be applied to denoise positive samples within negative sampling strategies, but they ignore temporal information. Existing work integrates sequential information during model aggregation but neglects time interval information, hindering accurate capture of users' current preferences. To address this problem, from a data perspective, we propose a novel temporal filtration-enhanced approach to construct a high-quality positive sample set. First, we design a time decay model based on interaction time intervals, transforming the original graph into a weighted user-item bipartite graph. Then, based on predefined filtering operations, the weighted user-item bipartite graph is layered. Finally, we design a layer-enhancement strategy to construct a high-quality positive sample set for the layered subgraphs.
  We provide theoretical insights into why TFPS can improve Recall@k and NDCG@k, and extensive experiments on three real-world datasets demonstrate the effectiveness of the proposed method. Additionally, TFPS can be integrated with various implicit CF recommenders or negative sampling methods to enhance its performance.
\end{abstract}

\ccsdesc[500]{Information systems~Recommender systems}

%%
%% Keywords. The author(s) should pick words that accurately describe
%% the work being presented. Separate the keywords with commas.
\keywords{Negative Sampling, Temporal Filtration, Implicit Feedback, Positive Sample Set Construction}

\author{Jiayi Wu}
\affiliation{
	\institution{Beijing Institute of Technology}
	\city{Beijing}
	\country{China}
}

\author{Zhengyu Wu}
\affiliation{
	\institution{Beijing Institute of Technology}
	\city{Beijing}
	\country{China}
}

\author{Xunkai Li}
\affiliation{
	\institution{Beijing Institute of Technology}
	\city{Beijing}
	\country{China}
}

\author{Rong-Hua Li}
\email{lironghuabit@126.com}
\affiliation{
	\institution{Beijing Institute of Technology}
	\city{Beijing}
	\country{China}
}
\authornote{Corresponding author.}

\author{Guoren Wang}
\affiliation{
	\institution{Beijing Institute of Technology}
	\city{Beijing}
	\country{China}
}

%%
%% This command processes the author and affiliation and title
%% information and builds the first part of the formatted document.
\maketitle

\section{Introduction}
Implicit collaborative filtering (CF) \cite{1}, as an important technology in recommendation systems, is widely applied in data scenarios based on implicit feedback, such as users' listening \cite{2,3}, purchasing \cite{5}, and viewing \cite{6} behaviors. These implicit feedback differs from traditional explicit feedback \cite{9} (such as ratings) as they lack clear negative feedback signals, which poses challenges for model training. To overcome this issue, negative sampling strategies are widely used. They construct negative samples from uninteracted items to assist the model's learning, enabling it to better distinguish between items of interest and those not of interest to the user.

Existing negative sampling methods typically construct high-quality negative samples by designing complex negative sampling strategies. These methods often include techniques such as mixup \cite{10,11} and decoupling \cite{12, 13}, aiming to improve the model's discriminative ability with more challenging negative samples. However, these methods neglect the exploration of positive samples. In implicit CF, a user's positive actions (positive samples) are not only a direct reflection of their true interests but also the sole source of information for the model to learn user preferences, providing clear supervision signals and optimization directions during the model's training process \cite{14}. Additionally, positive samples are a crucial basis for constructing negative sample selection criteria. This is because negative samples are often obtained based on their similarity to positive samples, as seen in methods like DNS \cite{15}, MixGCF \cite{10}, and AHNS \cite{16}. Therefore, the higher the quality of the positive samples, the better the effectiveness of negative sampling, which in turn improves model performance. 

Since negative sampling methods do not explicitly optimize the positive set, we can leverage denoising recommendation techniques to denoise positive samples. They assign lower weights to noisy positive samples \cite{44,45,46,57} or drop these noisy positive samples \cite{44,47,48} based on statistical characteristics, such as the mean and variance of the loss. The above methods ignore temporal information, yet prior studies have demonstrated that user preferences evolve dynamically over time \cite{51}, such as short-term seasonal shifts (e.g., from summer t-shirts to winter coats) and long-term interest changes (e.g., from pop to classical music). Consequently, neglecting temporal information prevents model from effectively capturing current user preferences, thereby reducing its accuracy in predicting future interactions. As shown in Fig.~1, the performance of existing methods on timestamp-partitioned is significantly lower than on randomly partitioned datasets. This further demonstrates that these methods cannot effectively capture users' current preferences. STAM \cite{51} incorporates item sequence information based on interaction timestamps into the model aggregation process, enhancing model performance. However, it ignores time intervals, potentially misinterpreting distant interactions as recent preferences or dense short-term interactions as outdated preferences. By filtering out items with interaction intervals exceeding 30 days when constructing sequences in STAM, we observed that STAM's recall@20 improved by 4.04\% and 7.63\% on AmazonCDs and Lastfm datasets. This demonstrates that time interval information effectively enhances model’s ability to learn users’ current preferences. 

\begin{figure}[t]
	\centering
	\includegraphics[width=\linewidth]{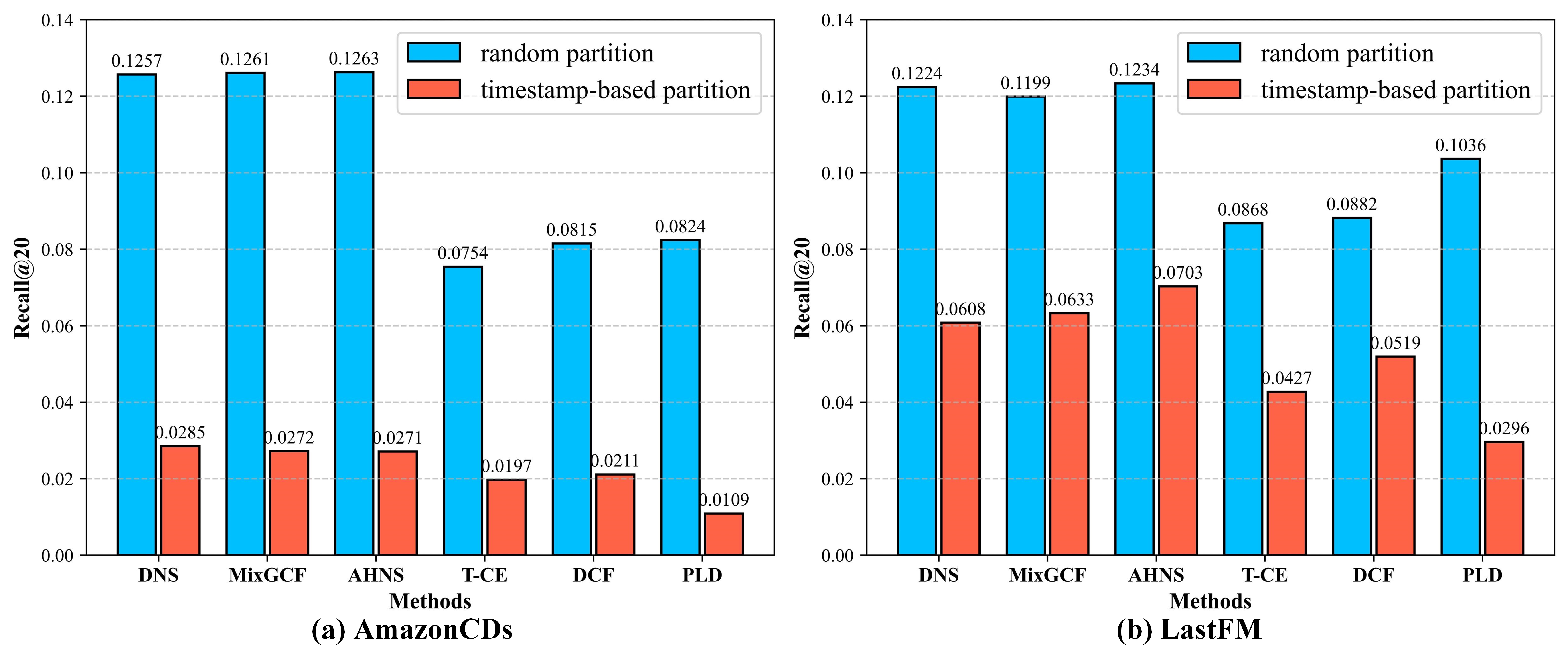}
	
	\vspace{-0.3cm}
	
	\caption{Impact of data partitioning on model performance.}
	
	\vspace{-0.5cm}
	
\end{figure}

Building on these observations, we propose a novel temporal filtration-enhanced positive sample set construction method (TFPS) aimed at building a high-quality positive sample set to improve the model's ability to learn users' current preferences. First, we apply a time-decay model to weight user interactions and transform the original bipartite graph into a weighted graph. This method assigns different weights based on the time intervals of interactions, thus better reflecting the user's current interests. Then, based on predefined graph filtration operations, we convert the weighted bipartite graph into multiple layered subgraphs, which reflect changes in user interests across different items over various time periods. Finally, we employ a layer-enhancement mechanism to construct a positive sample set for these layered subgraphs. The higher the weight of a subgraph, the larger the proportion of its edges in the positive sample set. This strategy prioritizes recent user interactions in the construction of the positive sample set, ensuring that model accurately captures user's current preferences, thereby enhancing the timeliness and accuracy of the recommendation system.

\textbf{Our contributions.}
(1) \textit{\underline{New Perspective}}. 
We revisit negative sampling from the positive-side perspective and, to our knowledge, are the first to incorporate temporal information into positive sample set construction for negative-sampling-based implicit CF, enabling the model to accurately capture users' current preferences.
(2) \textit{\underline{Simple yet Effective Approach}}. 
We propose a temporal filtration-enhanced positive sample set construction approach, TFPS. It employs graph filtration to extract subgraphs reflecting users' current preferences, using data-level reweighting (via duplicating recent positives) to guide the model in accurately learning these preferences. We theoretically analyze how training with this set can amplify expected margin gains, providing insight into improved Recall@k and NDCG@k.
(3) \textit{\underline{High Performance and Applicability}}. 
Extensive experiments demonstrate that TFPS not only outperforms state-of-the-art baselines, but can also be plugged into different implicit CF recommenders and used in conjunction with different negative sampling strategies to further improve model performance.

\section{Preliminaries}

\subsection{Backgrounds}

\noindent \textbf{Graph filtration}~\cite{17,18} is an important concept in graph analysis, which constructs a nested sequence of subgraphs by gradually including nodes/edges according to a predefined filtration criterion (e.g., edge weights or node degrees).
Given a weighted graph $G=(V,E,W)$, we illustrate edge-weight filtration by selecting $n$ increasing thresholds within $[W_{\min}, W_{\max}]$, i.e., $W_{\min}<w_1<\cdots<w_n < W_{\max}$.
For each threshold, we keep all edges whose weights are no larger than the threshold and form a filtered subgraph; as the threshold becomes less restrictive, more edges are included, yielding a nested sequence of subgraphs.
In practice, we further derive incremental layers from this nested filtration by taking the successive differences between adjacent thresholded subgraphs (i.e., newly added edges), thereby constructing a set of non-overlapping layered subgraphs that characterize the graph structure at different weight levels~\cite{19}.
In our work, these graph filtration-induced layers directly support the subsequent layer-enhancement strategy for constructing the positive sample set.

\textbf{Negative sampling} ~\cite{20} is a widely used technique in implicit CF recommendation, where items actually interacted with by the user are first treated as positive samples. Specific negative sampling strategies are then employed to obtain negative samples based on the items not interacted by users. Finally, the difference between the scores of positive and negative samples is maximized to improve the model's ability to identify the user's true interests, thereby improving the accuracy of the recommender.

\subsection{Existing Methods and Their Defects}

Existing negative sampling methods primarily focus on obtaining high-quality negative samples and are called negative sample-focused negative sampling methods. These methods are mainly divided into two categories: Select Negative Sample Methods and Construct Negative Sample Methods.

\textbf{Select Negative Sample Methods} primarily involve designing an appropriate negative sampling distribution to choose negative samples from uninteracted items for model training. For example, RNS \cite{22} designs the negative sampling distribution as a uniform distribution, randomly selecting negative samples from uninteracted items. PNS \cite{23} designs the negative sampling distribution based on item popularity, where more popular items have a higher probability of being selected as negative samples. DNS \cite{15} designs a negative sampling distribution based on the current model's predicted scores, where negative samples with higher scores have a greater chance of being selected. The negative sampling distribution in IRGAN \cite{24} is designed by the generator during the adversarial training process, where the generator selects negative samples that are more difficult for the discriminator to distinguish. SRNS \cite{25} combines a variance-based sampling strategy, selecting negative samples with large score fluctuations to avoid the risk of false negatives. The negative sampling distribution in AHNS \cite{16} is related to the positive sample score, the lower the positive sample score, the higher the probability of selecting more challenging negative samples. DNS(M,N) \cite{26} is an extended DNS, where two predefined hyper-parameters control the difficulty of selecting negative samples, resulting in a more flexible negative sampling distribution.

\textbf{Construct Negative Sample Methods} do not directly select negative samples from the set of uninteracted items, but instead create more challenging negative samples through operations in the vector space. For example, MixGCF \cite{10} first injects positive sample information into negative samples through positive mixing, then aggregates multi-hop neighbor information via hop mixing to generate negative samples. RecNS \cite{11} first uses Positive-Assisted Sampling to incorporate interaction information between users and positive samples to construct hard negative samples, then applies Exposure-Augmented Sampling to sample exposed but uninteracted data as negative samples, and finally synthesizes the negative samples constructed by both modules in the vector space. DENS \cite{13} first disentangles the relevant and irrelevant factors between positive samples and candidate negative samples in the vector space, then samples candidate negative samples that are similar to the positive samples in irrelevant factors but significantly different in relevant factors to serve as negative samples. ANS \cite{27} first disentangles the negative sample vectors into hard and easy factors, then augments the easy factor by controlling the direction and magnitude of noise to obtain negative samples. DBNS \cite{28} constructs negative samples using a diffusion model in the vector space, and controls the difficulty of negative samples based on time steps.

The above methods primarily focus on obtaining high-quality negative samples, while neglecting the exploration of positive samples and failing to consider the role of positive samples in improving model learning. Denoising recommendation methods can be applied to denoise positive samples within negative sampling strategies, called positive sample-focused negative sampling methods. These methods are mainly divided into two categories: Reweight Positive Sample Methods and Drop Positive Sample Methods.

\textbf{Reweight Positive Sample Methods} refer to assigning lower weights to noisy positive samples during the training process. For example, R-CE \cite{44} considers that positive samples with higher losses during training are more likely to be noisy samples, thus assigning lower weights to these samples. DeCA \cite{45} proposes a method for identifying noisy samples by leveraging the prediction consistency of different recommendation models. SGDL \cite{49} proposes a two-stage training approach. Memory data collected during the noise-resistant stage is used for self-guided training in the noise-sensitive stage, alternately optimizing the weight matrix parameters and the recommendation model parameters. Additionally, the authors adaptively determine the stage transition point by estimating the noise rate using a Gaussian Mixture Model \cite{50}. BOD \cite{46} introduces a bi-level optimization architecture to iteratively optimize the weight matrix and the model, ensuring that the weight matrix can store information from previous training. PLD \cite{57} designs a resampling strategy based on personal loss distribution to reduce the probability of noisy samples being optimized.

\begin{figure*}[t]
	\centering
	\includegraphics[width=0.93\linewidth]{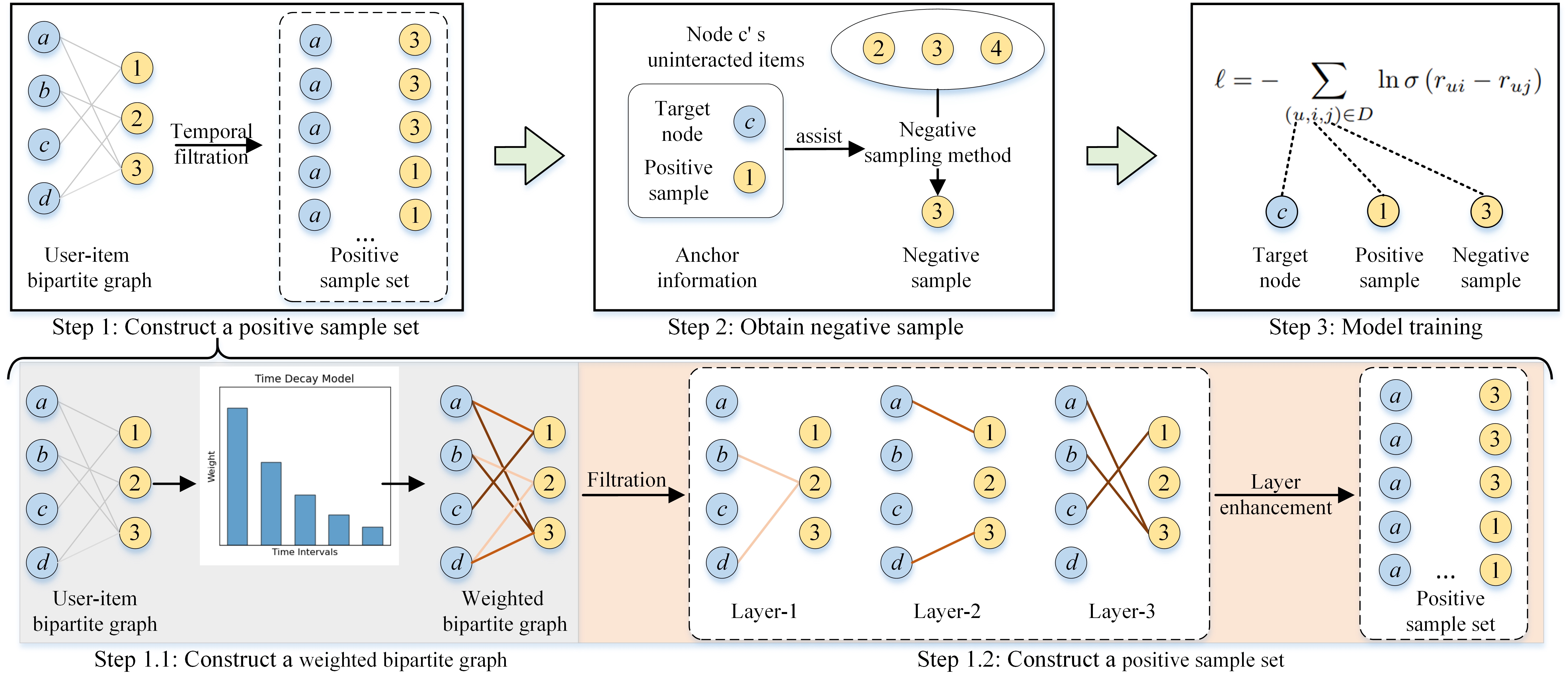}
	
	\vspace{-0.3cm}
	
	\caption{An overview of TFPS, where the darker the color of the edges, the higher the edge weight.}
	
	\vspace{-0.3cm}
	
\end{figure*}

\textbf{Drop Positive Sample Methods} involve dropping noisy positive samples during the training process. For example, T-CE \cite{44} drops positive samples with higher loss values during training. RGCF \cite{47} calculates the reliability of interactions based on the structural features of nodes and prunes interactions with lower reliability. DCF \cite{48} believes that the mean of losses cannot accurately distinguish between noisy samples and hard samples, so it combines the variance of losses to remove noisy samples; Then, a progressive label correction strategy is designed to correct some noisy samples, preventing the dataset from becoming sparse.

The aforementioned methods neglect temporal information, while user preferences continuously change over time, making it difficult for implicit CF recommendation models to capture current user interests, thus affecting their predictive performance for future interactions (See Fig.~1).  Sequential recommendation models, such as GRU4Rec \cite{60}, BERT4Rec \cite{61}, FEARec \cite{63}, and SIGMA \cite{65}, capture temporal dependencies by modeling user interaction sequences to predict the next interaction item, optimizing local behavior predictions. These sequential models differs fundamentally from the information modeling and optimization objectives of the models studied in this paper, and thus, we do not delve into sequential models. Instead, this paper focuses on implicit CF recommendation models, such as MF-BPR \cite{21}, NGCF \cite{22}, PinSage \cite{56}, and LightGCN \cite{32}, which model user-item interaction pairs and optimize global preference rankings through negative sampling strategies. Few studies explore integrating temporal information into this class of models. STAM \cite{51}, from a model perspective, constructs dynamic adjacency matrices based on the sequential order of user interactions, thereby incorporating temporal information during the aggregation phase. Nevertheless, this method neglects time interval information, potentially leading the model to erroneously treat sparse long-term interactions as recent interests or dense short-term interactions as outdated preferences, thus impacting the model’s accuracy in predicting future interactions. Driven by this issue, we innovatively propose a temporal filtration-enhanced negative sampling method from a data perspective, incorporating temporal information into negative sampling strategies. Theoretical analysis and extensive experiments demonstrate that TFPS can accurately capture users' current preferences and improve model performance.

\section{Methodology}

In this section, we introduce the proposed method TFPS. As shown in Fig.~2, TFPS focuses primarily on constructing a high-quality positive sample set. First, the original bipartite graph is transformed into a weighted bipartite graph using a time-decay model. Then, through graph filtration operations, the weighted bipartite graph is decomposed into multiple layers of bipartite subgraphs. Finally, we employ a layer-enhancement strategy based on the subgraph-level weight distribution to construct a high-quality positive sample set.

\subsection{Construct the Weighted Bipartite Graph}

Given an implicit feedback dataset $D = \big\{ {\big( {u,p,t_p^u} \big)|u \in U,p \in P} \big\}$, where $U$ and $P$ represent the set of users and the set of items, respectively. $\big( {u,p,t_p^u} \big)$ denotes that user $u$ interacted with item $p$ at time $t$.
We construct a user-item bipartite graph $G = \left\{ {V,E} \right\}$ based on dataset $D$, where $V$ represents the set of user and item nodes, and $E$ represents the set of user-item interaction behaviors.

Considering that users' interests typically evolve over time, recent behaviors more accurately reflect their current interest states. Moreover, these changes usually happen gradually rather than abruptly \cite{29}, so we adopt an simple yet effective exponential time-decay function to weight the interaction data and introduce one parameter to control the decay rate. 
For any interaction data $\big( {u,p,t_p^u} \big) \in D$, the formula for calculating the weight between $u$ and $p$ is as follows:
\begin{equation}
	W\left( {u,p} \right) = {e^{ - \lambda \left( {t_u^{max} - t_u^p} \right)}},
\end{equation}
where $t_u^{max}$ represents the most recent interaction time of user $u$, and $\lambda$ is the hyperparameter controlling the time decay rate.

\subsection{Construct the Positive Sample Set}

The time decay model assigns high weights to recent user interactions. To ensure the model focuses more on these high-weight interactions, we first introduce a graph filtration operation to construct filtration-enhanced layered subgraphs, extracting the high-weight edges most relevant to the user's current interests. We then use a layer-enhancement approach to construct the positive sample set, ensuring that higher-weighted layered subgraphs occupy a larger proportion in the positive sample set. This allows the model to focus more on the high-weight interactions that represent the user's current interests during the training process, thereby improving its ability to learn users' current preferences.

\subsubsection{Layered Subgraph Construction}

In the weighted user-item bipartite graph $\hat{G}$, edge weights lie in $[0,1]$.
Therefore, following prior work~\cite{19}, we set a threshold sequence from $0$ to $1$ by evenly partitioning $[0,1]$ into $n$ equal-length bins, yielding $0=w_0 < w_1 < \cdots < w_{n-1} < w_n=1$. 
This uniform discretization keeps the non-uniformity of the interaction-weight distribution solely governed by the decay coefficient $\lambda$, while $n$ controls the granularity of discretization.
With this threshold sequence, we induce an edge-weight filtration over $\hat{G}$ and obtain incremental layers between adjacent thresholds.
For each sub-interval $[w_i, w_{i+1})$ and any edge $e \in E$, we construct a corresponding layered subgraph $\hat{G}_i$, which retains only the edges in $\hat{G}$ whose weights fall within this interval (i.e., $w_i \le w_e < w_{i+1}$).
This layered processing approach enables the model to flexibly select and emphasize high-weight edges that are related to the users' current preferences when constructing the positive sample set.

\subsubsection{Layer Enhancement}

For high-layer subgraphs (i.e., those containing high-weight edges), we increase their proportion in the positive sample set $PSS$ so that the model focuses more on interactions that better reflect users' current preferences. However, overly emphasizing high-weight edges may over-concentrate the positive frequency distribution and harm generalization. To make this enhancement effective yet controllable, we adopt a linear layer-enhancement scheme: for any edge $e \in E$, if $e$ belongs to the $i$-th layered subgraph $\hat{G}_i$, we include it in $PSS$ with $i$ copies, thereby increasing the training exposure of high-weight interactions. Notably, $PSS$ is an augmented positive sample set that allows duplicated interaction pairs rather than a strictly de-duplicated set. The impact of enhancement intensity on performance is analyzed in Section~4.3.

Algorithm 1, which outlines the overall process of TFPS, consists of three parts. The first part (Line 1) constructs a weighted bipartite graph based on the time decay model. The second part (Lines 2-7) performs graph filtration on $\hat{G}$ to obtain layered subgraphs, while the third part (Lines 8-11) uses a layer enhancement mechanism to construct the positive sample set.

\subsection{Model Optimization}

We train the recommender with the Bayesian Personalized Ranking (BPR) loss~\cite{21}. 
For each interaction pair $(u,p)\in PSS$, we sample a negative item $p^- \sim f(\cdot \mid u,p)$ and define the loss as:
\begin{equation}
	\ell_{\mathrm{BPR}}(u,p,p^-)
	= - \ln \sigma \left( \mathbf{e}_u^\top \mathbf{e}_p - \mathbf{e}_u^\top \mathbf{e}_{p^-} \right),
\end{equation}
where $\sigma(\cdot)$ is the sigmoid function and $f(\cdot \mid u,p)$ denotes the negative sampling distribution.

\textbf{Remark.}
Training over $PSS$ induces an implicit discrete reweighting.
Let $m_{up}$ denote the number of occurrences of $(u,p)$ in $PSS$, then the positive distribution is
$\pi(u,p)=\frac{m_{up}}{\sum_{(u',p')\in PSS} m_{u'p'}}$,
and minimizing the BPR loss over $PSS$ is equivalent to minimizing
$\mathbb{E}_{(u,p)\sim \pi}\mathbb{E}_{p^-\sim f(\cdot\mid u,p)}[\ell_{\mathrm{BPR}}(u,p,p^-)]$.
This implicit reweighting is intentionally realized at the data level via frequency redistribution: by repeating high-weight (recent) interactions, TFPS allocates more optimization steps to signals reflecting users' current preferences, while still retaining historical interactions to preserve informative long-term preferences.
Moreover, graph filtration discretizes continuous time-decay weights into a hierarchy of subgraphs, whose granularity is controlled by the number of layers $n$, enabling fine-grained control over which interactions are emphasized in $PSS$.
In contrast, Weighted-BPR (continuous loss reweighting) keeps the original positive sample set unchanged and only rescales per-instance gradients, which may still provide limited training exposure to recent positives; when weights are highly skewed, gradients of older interactions can become near-zero, diminishing long-term preference signals and potentially hurting model performance, as evidenced by the results in Section~4.3.

\begin{algorithm}[t]
	\renewcommand{\algorithmicrequire}{\textbf{Input:}}
	\renewcommand{\algorithmicensure}{\textbf{Output:}}
	\caption{Positive Sample Set Construction}
	\label{alg}
	\begin{algorithmic}[1]
		
		\REQUIRE Training set ${D_{train}} = \big\{ {\big( {u,p,t_p^u} \big)} \big\}$, Time decay rate $\lambda$, Number of subgraph layers $n$ \\ 
		\ENSURE Positive Sample Set ($PSS$) \\
		\STATE $\hat{G} = \left\langle V, E, W \right\rangle \leftarrow$ Construct the weighted bipartite graph from $D_{\text{train}}$ using (1).

		\STATE $W_{\text{max}} = \max(W)$, $W_{\text{min}} = \min(W)$

		\STATE $\text{Interval}_n \leftarrow$ Divide $[W_{\text{min}}, W_{\text{max}}]$ into $n$ equal intervals

		\STATE $G_1, G_2, \dots, G_n \leftarrow$ Initialize $n$ layered subgraphs
		
		\FOR {each $(u, p) \in E$}
		\STATE Add $(u, p)$ to the i-th subgraph $G_i$ if $W(u, p) \in [W_{i-1}, W_i)$
		\ENDFOR
		
		\STATE $PSS \leftarrow \emptyset$
		\FOR{$i = 1$ to $n$}
		\STATE $PSS \leftarrow PSS \cup (E_i \times i)$ \COMMENT{Add edges from $E_i$ repeated $i$ times}
		\ENDFOR
		\RETURN $PSS$
		
	\end{algorithmic}
\end{algorithm}

\subsection{Theoretical Analysis}

In this section, we provide theoretical insights into TFPS by analyzing how its data-level reweighting (via duplicating recent positives) amplifies pairwise margin improvements.
These results help explain the empirical gains of TFPS on top-$k$ ranking metrics such as Recall@k and NDCG@k under timestamp-partitioned evaluation.
\begin{theorem}[Local one-step margin improvement and TFPS-induced update amplification]
	\label{thm:margin_gain}
	Consider an implicit CF model trained with the BPR loss and a first-order optimizer (e.g., Adam).
	At each step, a negative item $p^- \sim f(\cdot\mid u,p)$ is sampled using the current parameters before the subsequent update.
	Let $s_{up}(\theta)$ denote the model score of user $u$ on item $p$ under parameters $\theta$.
	Define the pairwise margin as $\Delta_{u,p,p^-}(\theta)\triangleq s_{up}(\theta)-s_{u p^-}(\theta)$.
	We analyze a one-step update under standard $L$-smoothness conditions in a stable-training regime, so that the Taylor remainder satisfies
	$|r_t|\le \tfrac{L}{2}\|\theta^{+}-\theta\|^2=O(\eta^2)$.
	Moreover, the update can be written as
	$
	\theta^{+}=\theta-\eta\, D_t\, \nabla_\theta \ell_{\mathrm{BPR}}(u,p,p^-),
	$
	where $\eta$ is the learning rate and $D_t$ is a positive diagonal matrix that element-wise rescales the gradient.
	Then, conditioning on a positive pair $(u,p)$, the expected one-step margin improvement satisfies
	\begin{equation}
		\begin{aligned}
			\mathbb{E}\!\left[\Delta_{u,p,p^-}(\theta^{+})-\Delta_{u,p,p^-}(\theta)\mid (u,p)\right]
			\ge
			\eta\,\mathbb{E}_{p^-\sim f(\cdot\mid u,p)}\!  \\ \big[
			\sigma\!\left(-\Delta_{u,p,p^-}(\theta)\right)
			\hspace*{0pt}\cdot\,
			\nabla_\theta \Delta_{u,p,p^-}(\theta)^\top
			D_t\,
			\nabla_\theta \Delta_{u,p,p^-}(\theta)
			\big]
			-\; O(\eta^2).
		\end{aligned}
	\end{equation}
	Moreover, TFPS performs data-level reweighting by duplicating high-weight (recent) interactions, thereby increasing their proportion in $PSS$.
	Let $m_{up}$ denote the number of occurrences of $(u,p)$ in $PSS$.
	When training iterates over $PSS$, the pair $(u,p)$ is used for parameter updates exactly $m_{up}$ times per epoch.
	Thus, high-weight interactions with larger $m_{up}$ obtain more margin-increasing updates in total (in expectation).
\end{theorem}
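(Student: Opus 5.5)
We will prove the first-order bound by a Taylor expansion of the margin along the update direction, and then obtain the amplification claim by counting occurrences in $PSS$. Fix a positive pair $(u,p)$ and a sampled negative $p^-$, drawn from $f(\cdot\mid u,p)$ at the current $\theta$ and hence independent of the update noise given $\theta$. Write $\Delta=\Delta_{u,p,p^-}$ for short.

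\textbf{Step 1: the update direction.} Since $\ell_{\mathrm{BPR}}(u,p,p^-)=-\ln\sigma(\Delta(\theta))$ and $\frac{d}{dx}\bigl(-\ln\sigma(x)\bigr)=-\sigma(-x)$, the chain rule gives
$\nabla_\theta\ell_{\mathrm{BPR}}=-\sigma(-\Delta(\theta))\,\nabla_\theta\Delta(\theta)$. Hence
\[
\theta^{+}-\theta=\eta\,\sigma(-\Delta(\theta))\,D_t\,\nabla_\theta\Delta(\theta).
\]

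\textbf{Step 2: expand the margin.} By $L$-smoothness of $\Delta$, a first-order Taylor expansion gives $\Delta(\theta^{+})-\Delta(\theta)=\nabla_\theta\Delta(\theta)^\top(\theta^{+}-\theta)+r_t$ with $|r_t|\le\frac{L}{2}\|\theta^{+}-\theta\|^2$. Substituting Step 1 yields
\[
\Delta(\theta^{+})-\Delta(\theta)=\eta\,\sigma(-\Delta)\,\nabla_\theta\Delta^\top D_t\nabla_\theta\Delta+r_t
\ge \eta\,\sigma(-\Delta)\,\nabla_\theta\Delta^\top D_t\nabla_\theta\Delta-|r_t|.
\]
The leading term is nonnegative because $D_t$ is positive diagonal and $\sigma>0$.

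\textbf{Step 3: control the remainder and take expectations.} The remainder satisfies $\|\theta^{+}-\theta\|^2\le\eta^2\|D_t\|^2\|\nabla_\theta\Delta\|^2$, using $\sigma\le1$. In the stable-training regime, $D_t$ (e.g.\ Adam's preconditioner) and the gradients are bounded, so $|r_t|=O(\eta^2)$ with a constant uniform over the sampled $p^-$. Taking conditional expectation over $p^-\sim f(\cdot\mid u,p)$ given $(u,p)$ and $\theta$, and using linearity together with this uniform bound, gives the displayed inequality.

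This uniformity is the main obstacle. If $f$ has unbounded support over embeddings, or $D_t$ depends on the sampled gradient as it does in Adam, we cannot take expectations term by term without care. We will treat $D_t$ as fixed or bounded within the step, and bound $\sup_{p^-}\|\nabla\Delta\|$ over the finite item set, which suffices because the item set is finite.

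\textbf{Step 4: amplification.} Iterating over $PSS$ visits each copy of $(u,p)$ once per epoch, so $(u,p)$ is used exactly $m_{up}=i$ times, where $\hat G_i\ni(u,p)$. Each use contributes a step whose expected margin gain has a nonnegative leading term by Step 3. Summing over the $m_{up}$ steps (each evaluated at its current iterate) gives a total expected first-order gain of $\sum_{k=1}^{m_{up}}\eta\,\mathbb{E}[\sigma(-\Delta)\nabla\Delta^\top D\nabla\Delta]_{\theta_k}-O(m_{up}\eta^2)$. This sum contains $m_{up}$ nonnegative terms, and $m_{up}$ is increasing in the layer index, hence in the time-decay weight. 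This is the sense in which recent interactions receive more margin-increasing updates.
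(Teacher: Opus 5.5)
Your proposal is correct and follows the paper's proof sketch essentially step for step: the chain rule giving $\nabla_\theta\ell_{\mathrm{BPR}}=-\sigma(-\Delta)\nabla_\theta\Delta$, a first-order Taylor expansion whose $L$-smooth remainder is $O(\eta^2)$ under bounded $D_t$ and gradients, an expectation over $p^-$, and counting the $m_{up}$ uses of $(u,p)$ per epoch. Your additional care — making the remainder bound uniform over $p^-$ via the finite item set, and tracking the accumulated $O(m_{up}\eta^2)$ term across visits — only refines the same argument.
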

\begin{proof}[Proof sketch]
	Fix a positive pair $(u,p)$ and a sampled negative item $p^- \sim f(\cdot\mid u,p)$ in the current step.
	For brevity, denote $\Delta(\theta)\equiv \Delta_{u,p,p^-}(\theta)$.
	The one-step update is
	\begin{equation}
		\theta^{+}=\theta-\eta\,D_t\,\nabla_\theta \ell_{\mathrm{BPR}}(u,p,p^-).
	\end{equation}
	Since $\ell_{\mathrm{BPR}}(u,p,p^-)=-\log\sigma(\Delta(\theta))$, by the chain rule,
	\begin{equation}
		\nabla_\theta \ell_{\mathrm{BPR}}(u,p,p^-)
		= -\sigma\!\left(-\Delta(\theta)\right)\,\nabla_\theta \Delta(\theta).
		\label{eq:bpr_grad}
	\end{equation}
	By a first-order expansion with remainder under the $L$-smoothness condition, there exists a scalar $r_t$ such that
	\begin{equation}
		\Delta(\theta^{+})-\Delta(\theta)
		=
		\nabla_\theta \Delta(\theta)^\top(\theta^{+}-\theta) + r_t,
	\end{equation}
	where $|r_t|\le \tfrac{L}{2}\|\theta^{+}-\theta\|^2$.
	Substituting $\theta^{+}-\theta=-\eta D_t\nabla_\theta \ell_{\mathrm{BPR}}$ and~\eqref{eq:bpr_grad} yields
	\begin{equation}
		\Delta(\theta^{+})-\Delta(\theta)
		=
		\eta\,\sigma(-\Delta(\theta))\,
		\nabla_\theta \Delta(\theta)^\top D_t \nabla_\theta \Delta(\theta) + r_t.
	\end{equation}
	In the stable-training regime, $\|D_t\|$ and $\|\nabla_\theta \ell_{\mathrm{BPR}}\|$ are bounded, hence
	$\|\theta^{+}-\theta\|=\eta\|D_t\nabla_\theta \ell_{\mathrm{BPR}}\|=O(\eta)$ and thus $r_t\ge -O(\eta^2)$.
	Therefore,
	\begin{equation}
		\Delta(\theta^{+})-\Delta(\theta)
		\ge
		\eta\,\sigma(-\Delta(\theta))\,
		\nabla_\theta \Delta(\theta)^\top D_t \nabla_\theta \Delta(\theta)
		-\; O(\eta^2).
	\end{equation}
	Taking expectation over $p^- \sim f(\cdot\mid u,p)$ gives the stated inequality.
	The TFPS claim follows since iterating over $PSS$ uses $(u,p)$ for exactly $m_{up}$ updates per epoch,
	thereby accumulating $m_{up}$ such local margin-improving updates.
\end{proof}

\begin{corollary}[TFPS insight for test Recall@k via margin amplification]
	\label{cor:recall_insight}
	Given a positive interaction $(u,p)\in PSS$, with a negative item $p^- \sim f(\cdot\mid u,p)$ sampled for $(u,p)$.
	By Theorem~\ref{thm:margin_gain}, each update on $(u,p)$ increases the pairwise margin
	$\Delta_{u,p,p^-}=s_{up}-s_{u p^-}$ in expectation, and TFPS assigns $m_{up}$ update opportunities per epoch
	to high-weight interactions via duplication in $PSS$, thereby amplifying their expected cumulative separation from sampled negatives.
	Larger margins make sampled competitors less likely to outrank $p$, which tends to move $p$ upward in the ranked list produced by the learned parameters $\theta$.
	
	In timestamp-partitioned evaluation, the test positive set $P(u)$ is biased toward recent interests and thus is more aligned with the high-weight interactions emphasized by TFPS.
	Therefore, improving the ranks of such recent positives during training tends to translate into better ranks of items in $P(u)$ at test time.
	The test Recall@k for $u$ is
	\begin{equation}
		\mathrm{Recall@k}(u)
		=
		\frac{|P(u)\cap R_k(u)|}{|P(u)|}
		=
		\frac{1}{|P(u)|}\sum_{p\in P(u)} \mathbb{I}(r_p\le k),
	\end{equation}
	where $R_k(u)$ denotes the top-$k$ recommendation list, $r_p$ denotes the rank of $p$,
	and $\mathbb{I}(\cdot)$ is the indicator function.
	Hence, making test positives rank earlier (smaller $r_p$) increases the chance that $\mathbb{I}(r_p\le k)=1$ and thus improves test Recall@k, providing theoretical insight into the observed gains.
\end{corollary}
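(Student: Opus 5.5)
The corollary is heuristic ("tends to"), so I will not aim for a sharp theorem. Instead I will make precise the chain of implications it asserts and isolate the one assumption needed to pass from training to test. The plan has three steps: (i) aggregate the per-step guarantee of Theorem~\ref{thm:margin_gain} over one epoch; (ii) convert expected margin growth into a bound on the expected rank $r_p$ of the training positive; (iii) transfer this rank improvement to the test positives $P(u)$ through an alignment assumption, then plug into the indicator form of Recall@k.

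\textbf{Step (i).} Fix $(u,p)\in PSS$ and condition on the $m_{up}$ steps per epoch at which $(u,p)$ is drawn. At each such step, Theorem~\ref{thm:margin_gain} gives an expected increase of at least
\[
\eta\,\mathbb{E}_{p^-}\big[\sigma(-\Delta)\,\nabla\Delta^\top D_t\nabla\Delta\big]-O(\eta^2).
\]
This quantity is nonnegative up to $O(\eta^2)$, because $D_t$ is positive diagonal and $\sigma(\cdot)>0$. Steps on other pairs perturb $\Delta_{u,p,p^-}$ by interference terms. I would assume these are bounded by $O(\eta)$ and mean-zero or nonnegative on average, which is the usual collaborative reinforcement from neighbors. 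Under that assumption, summing and telescoping gives a cumulative expected gain of order $m_{up}\,\eta\,c_{up}$ minus the interference. Here $c_{up}$ is a lower bound on the bracketed term, which holds while the margin is not yet saturated. The conclusion is that the cumulative gain is monotone in $m_{up}$.

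\textbf{Step (ii).} Write
\[
\mathbb{E}[r_p]=1+\sum_{q\neq p,\,q\notin\text{train}(u)}\Pr(s_{uq}>s_{up}).
\]
For each competitor $q$ in the support of $f(\cdot\mid u,p)$, a larger $\Delta_{u,p,q}$ lowers this probability. For instance, under a sub-Gaussian fluctuation model of the scores, Markov/Chernoff gives $\Pr(s_{uq}>s_{up})\le\exp(-\Delta_{u,p,q}^2/2\nu^2)$. Hence $\mathbb{E}[r_p]$ decreases as $m_{up}$ grows, and by Markov's inequality $\Pr(r_p\le k)\ge 1-\mathbb{E}[r_p-1]/k$ increases.

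\textbf{Step (iii).} Timestamp partitioning is the crux. I would introduce an explicit \emph{alignment} hypothesis: for each test item $p'\in P(u)$, its score is positively coupled to the scores of recent training positives. One concrete form is $s_{up'}-s_{uq}\ge\alpha\,(s_{up}-s_{uq})-\epsilon$ for some high-weight $p$ and $\alpha>0$, which is plausible since $\mathbf{e}_{p'}\approx\mathbf{e}_p$ for items in the current interest. Under this hypothesis the margin gains transfer to $p'$, so $\Pr(r_{p'}\le k)$ is nondecreasing in the duplication counts of recent positives. Averaging $\mathbb{I}(r_{p'}\le k)$ over $P(u)$ in the displayed Recall@k formula gives the claim. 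The NDCG@k analogue follows identically, since $1/\log_2(1+r)$ is also decreasing in $r$.

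The \textbf{main obstacle} is Step (iii) together with the interference terms in Step (i). Nothing in the theorem guarantees that emphasizing recent pairs does not lower other margins, including those of older positives. It also does not guarantee that test items resemble recent training items. I would therefore state both as assumptions rather than attempt to derive them, and present the result as a conditional monotonicity statement. This matches the ``insight'' character of the corollary.
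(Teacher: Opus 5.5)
Your proposal follows the same chain as the paper's own heuristic justification: the per-step margin gain from Theorem~\ref{thm:margin_gain}, its amplification by the $m_{up}$ copies in $PSS$, the conversion from margin to rank, the alignment of timestamp-split test positives with recent training positives, and finally the indicator form of Recall@k. You mainly make explicit the assumptions the paper leaves under ``tends to'' (interference control, a score-fluctuation model, an alignment inequality), but note that your Chernoff, Markov and alignment steps only show that \emph{bounds} on $\mathbb{E}[r_p]$ and $\Pr(r_p\le k)$ improve, and step (i) controls only the margin against a freshly re-sampled negative rather than every $\Delta_{u,p,q}$, so the monotonicity you claim in steps (ii) and (iii) should be stated for those bounds, not for the probabilities themselves.
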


\begin{corollary}[TFPS insight for test NDCG@k via a margin-monotone lower bound]
	\label{cor:ndcg_insight}
		Given a user $u$ (using the notation in Corollary~\ref{cor:recall_insight}),
		we introduce a margin-based surrogate that measures how well a test positive $p$ is separated from competing items:
		\begin{equation}
			\phi_u(p;\theta)
			\triangleq
			\frac{1}{1+\sum_{q\neq p}\exp\!\left(s_{uq}(\theta)-s_{up}(\theta)\right)}.
		\end{equation}
		Then, the test NDCG@k admits the following lower bound:
		\begin{equation}
			\mathrm{NDCG@k}(u) = \frac{\text{DCG@k}(u)}{\text{IDCG@k}(u)}
			\ge
			\frac{1}{Z_k(u)}
			\sum_{p\in P(u)}
			\mathbb{I}(r_p\le k)\,\phi_u(p;\theta),
			\label{eq:ndcg_lb}
		\end{equation}
	where
	$\mathrm{DCG@k}(u)=\sum_{p\in P(u)\cap R_k(u)}\frac{1}{\log_2(r_p+1)}$ and
	$\mathrm{IDCG@k}(u)=Z_k(u)\triangleq \sum_{i=1}^{\min(k,|P(u)|)}\frac{1}{\log_2(i+1)}$.
	The inequality follows since for any $p\in P(u)\cap R_k(u)$,
	$\frac{1}{\log_2(r_p+1)}\ge \frac{1}{r_p}\ge \phi_u(p;\theta)$,
	using $\log_2(r_p+1)\le r_p$ and
	$r_p=1+\sum_{q\neq p}\mathbb{I}(s_{uq}>s_{up})\le 1+\sum_{q\neq p}\exp(s_{uq}-s_{up})$.

	By Theorem~\ref{thm:margin_gain}, TFPS allocates more margin-improving updates to high-weight (recent) positives during training,
	which tends to increase their pairwise margins $s_{up}(\theta)-s_{uq}(\theta)$ and hence increases $\phi_u(p;\theta)$.
	Under timestamp-partitioned evaluation, test positives are biased toward recent interests and thus are more aligned with these high-weight positives;
	therefore the right-hand side term $\mathbb{I}(r_p\le k)\,\phi_u(p;\theta)$ in~\eqref{eq:ndcg_lb} tends to increase,
	providing theoretical insight into the observed test NDCG@k improvements.
\end{corollary}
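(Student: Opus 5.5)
The statement to prove is Corollary~\ref{cor:ndcg_insight}. Its only rigorous content is the lower bound~\eqref{eq:ndcg_lb}; the rest is a qualitative appeal to Theorem~\ref{thm:margin_gain}. I will prove the inequality term by term and then explain how the margin theorem bears on the right-hand side.

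\textbf{Step 1: rewrite both sides as sums over $P(u)$.} Since $\mathrm{IDCG@k}(u)=Z_k(u)>0$ is independent of $\theta$, it suffices to show $\mathrm{DCG@k}(u)\ge \sum_{p\in P(u)}\mathbb{I}(r_p\le k)\,\phi_u(p;\theta)$. Writing $\mathrm{DCG@k}(u)=\sum_{p\in P(u)}\mathbb{I}(r_p\le k)/\log_2(r_p+1)$, both sides become sums over $P(u)$ with the same indicator. Every summand is nonnegative, so it is enough to prove, for each $p$ with $r_p\le k$,
\[
\frac{1}{\log_2(r_p+1)}\ge \phi_u(p;\theta).
\]

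\textbf{Step 2: compare $\log_2(r_p+1)$ with $r_p$.} I will show $\log_2(r+1)\le r$ for every integer $r\ge 1$, i.e.\ $r+1\le 2^r$. This holds with equality at $r=1$ and follows by induction, since $2^{r+1}=2\cdot 2^r\ge 2(r+1)\ge r+2$. Hence $1/\log_2(r_p+1)\ge 1/r_p$.

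\textbf{Step 3: compare $r_p$ with a sum of exponentials.} With ties broken in favour of $p$, the rank satisfies $r_p=1+\sum_{q\ne p}\mathbb{I}(s_{uq}>s_{up})$. When $s_{uq}>s_{up}$ we have $\exp(s_{uq}-s_{up})>1$, and in all cases $\exp(\cdot)>0$, so $\mathbb{I}(s_{uq}>s_{up})\le \exp(s_{uq}-s_{up})$ pointwise. Summing over $q\ne p$ and inverting gives $1/r_p\ge \phi_u(p;\theta)$. Chaining Steps 2 and 3, multiplying by the indicator, summing over $P(u)$ and dividing by $Z_k(u)$ yields~\eqref{eq:ndcg_lb}.

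\textbf{Step 4: connect to margins.} The surrogate $\phi_u(p;\theta)$ is strictly increasing in each margin $s_{up}-s_{uq}$, since the denominator decreases as any margin grows. Theorem~\ref{thm:margin_gain} gives an expected one-step increase of each sampled margin, and $(u,p)$ receives $m_{up}$ such updates per epoch under TFPS. Thus, to first order in $\eta$, recent positives accrue larger expected margin increases, which raises $\phi_u$ for them.

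\textbf{Main obstacle.} The inequality itself is elementary; the difficulty is Step 4. The theorem controls only margins against sampled negatives $p^-$, in expectation, and only locally. By contrast, $\phi_u$ involves all competitors $q$, and updates made for other pairs can decrease these margins. The transfer from training positives to test positives $P(u)$ is also only a distributional-alignment heuristic. I would therefore present Step 4 explicitly as insight rather than a theorem. If a rigorous link were wanted, I would first try to assume that the negative sampler $f$ has full support over uninteracted items, and that cross-pair interference is $O(\eta^2)$ under smoothness. Under these assumptions the expected sum $\sum_q \exp(s_{uq}-s_{up})$ decreases at first order, and a conclusion could be stated.
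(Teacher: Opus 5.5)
Your proposal is correct and is essentially the paper's argument: both reduce the bound to the per-item chain $1/\log_2(r_p+1)\ge 1/r_p\ge \phi_u(p;\theta)$ for $r_p\le k$, using $r_p+1\le 2^{r_p}$ and the pointwise bound $\mathbb{I}(s_{uq}>s_{up})\le \exp(s_{uq}-s_{up})$, and both treat the margin connection (your Step 4) as heuristic insight rather than a theorem. One small caution concerns only your closing aside: smoothness by itself does not make cross-pair interference $O(\eta^2)$, because an update on $(u,p,p^-)$ moves $\mathbf{e}_u$ and so generically shifts every score $s_{uq}$ at first order in $\eta$, which means that route would need a genuinely additional assumption.
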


\subsection{Time Complexity}

The time complexity of TFPS for constructing the positive sample set comprises two modules. For the weighted bipartite graph construction module (as shown in Algorithm 1), it involves three steps: bipartite graph initialization, computation of users’ most recent interaction times, and edge weight calculation. Each step requires traversing all interaction edges, resulting in a complexity of $O(|E|)$, where $|E|$ denotes the number of interaction edges. For the positive sample set construction module (as shown in Algorithm 2), it includes two steps: constructing layered subgraphs and constructing the positive sample set. Constructing layered subgraphs via graph filtration incurs a complexity of $O(|E|)$, as the number of layers is a small constant. Constructing the positive sample set via layer enhancement requires traversing all edges in the subgraphs (equivalent to interaction edges), also with a complexity of $O(|E|)$. Thus, this module’s complexity is $O(|E|)$. The time complexity of TFPS is $O(|E|)$, linear in the number of interaction edges. Moreover, TFPS only needs to run once before model training. We show TFPS’s efficiency in constructing positive sample sets in Section~IV.B.

\section{Experiments}

To demonstrate the effectiveness of the TFPS method, we conducted extensive experiments and answered the following questions: \textbf{(RQ1)} How does TFPS perform in comparison to baselines focusing on negative samples and on positive samples in implicit CF models? \textbf{(RQ2)} How do the parameters of TFPS and the distribution of positive samples affect the model performance, and why is the layer enhancement mechanism preferred over recency-aware strategies? \textbf{(RQ3)} How does TFPS perform when integrated with other negative sampling methods and implicit CF models (e.g., matrix factorization \cite{30, 31})? \textbf{(RQ4)} How does TFPS perform in a reference comparison with sequential recommendation models?

%\item {(RQ1)} How does TFPS perform in comparison to other baselines in implicit CF recommendation models? 
%
%
%\item {(RQ2)} How does parameters of TFPS affect the performance of recommendation tasks? 
%
%
%\item {(RQ3)} How does TFPS perform when integrated into other implicit CF recommendation models (e.g., matrix factorization \cite{30, 31})? 
%
%\item {(RQ4)} How does TFPS perform when integrated with other negative sampling methods?

\subsection{Experimental Setup}

\subsubsection{Datasets}

We select three widely used real-world datasets in implicit CF research \cite{58, 59, 13}, Ta-Feng, LastFM, and AmazonCDs, to evaluate TFPS. Ta-Feng\footnote{https://www.kaggle.com/datasets/chiranjivdas09/ta-feng-grocery-dataset} contains users' purchase data from a supermarket in Taiwan. LastFM\footnote{https://grouplens.org/datasets/hetrec-2011/} contains users' interaction data with music from LastFM music platform, such as listening records and tags. AmazonCDs\footnote{https://cseweb.ucsd.edu//~jmcauley/datasets/amazon\_v2/index.html} is based on user reviews of CD products from Amazon e-commerce platform. 
TFPS is designed for negative sampling-based CF models. It injects temporal signals into the positive sample set so that these models (without modifying the model architecture) can learn from past interactions to predict future behaviors. Accordingly, we divide the dataset based on a timestamp: we sort all interactions chronologically and choose the cutting timestamp such that approximately 80\% of interactions occur before it as the training set, while the remaining interactions constitute the test and validation sets. We remove users who only appear in the test and validation sets. The dataset statistics are shown in Table I. We use the widely adopted metrics, Recall@k and NDCG@k, to evaluate the performance of the recommender, with k values set to \{20, 30\}.

\begin{table}[t]
	\centering
	
	\vspace{0.1cm}
	
	\caption{Dataset Statistics}
	
	\small
	
	\setlength{\tabcolsep}{3.5pt} % 数值越大，列间距越宽
	
	\begin{tabular}{cccccc}
		\toprule
		Dataset & Users & Items & Interactions & Density & \makecell{Cutting\\Timestamp} \\
		\midrule
		Ta-Feng   & 29,339  & 23,685  & 793,602  & 0.114\% & 2001-02-04 \\
		LastFM    & 1,665   & 9,121   & 176,103  & 1.160\% & 2010-07-31 \\
		AmazonCDs & 91,190  & 73,692  & 1,242,564 & 0.018\% & 2014-12-01 \\
		\bottomrule
	\end{tabular}
	
	\vspace{-0.1cm}
	
\end{table}

\begin{table*}[t]
	\centering
	\caption{Overall Performance Comparison.}
	
	\vspace{-0.1cm}
	
	\resizebox{\linewidth}{36mm} {
		\setlength{\tabcolsep}{2mm}{
			\begin{tabular}{c|cccc|cccc|cccc}
				\toprule
				\multirow{1}{*}{Dataset} & \multicolumn{4}{c|}{AmazonCDs} & \multicolumn{4}{c|}{LastFM} & \multicolumn{4}{c}{Ta-Feng} \\ 
				\cmidrule(r){1-1} \cmidrule(r){2-5} \cmidrule(r){6-9} \cmidrule(r){10-13}
				\multirow{1}{*}{Method} & R@20 & N@20 & R@30 & N@30 & R@20 & N@20 & R@30 & N@30 & R@20 & N@20 & R@30 & N@30 \\
				\midrule
				RNS       & 0.0270 & 0.0152 & 0.0354 & 0.0175 & 0.0649 & 0.0723 & 0.0777 & 0.0738 & 0.0516 & 0.0424 & 0.0636 & 0.0464 \\
				DNS       & \underline{0.0285} & \underline{0.0159} & \underline{0.0357} & \underline{0.0180} & 0.0608 & 0.0724 & 0.0792 & 0.0754 & 0.0440 & 0.0319 & 0.0552 & 0.0356 \\
				DNS (M,N) & 0.0264 & 0.0150 & 0.0344 & 0.0171 & 0.0610 & 0.0736 & 0.0825 & 0.0778 & 0.0430 & 0.0295 & 0.0537 & 0.0331 \\
				MixGCF    & 0.0272 & 0.0153 & 0.0348 & 0.0174 & 0.0633 & 0.0744 & 0.0786 & 0.0775 & 0.0465 & 0.0334 & 0.0582 & 0.0373 \\
				DENS      & 0.0270 & 0.0149 & 0.0347 & 0.0170 & 0.0737 & 0.0756 & 0.0917 & 0.0790 & 0.0475 & 0.0401 & 0.0558 & 0.0429 \\
				AHNS      & 0.0271 & 0.0152 & 0.0342 & 0.0172 & 0.0703 & 0.0780 & 0.0811 & 0.0793 & 0.0528 & 0.0439 & 0.0692 & 0.0493 \\
				\midrule
				R-CE      & 0.0174 & 0.0096 & 0.0233 & 0.0112 & 0.0504 & 0.0614 & 0.0617 & 0.0631 & 0.0558 & 0.0450 & 0.0662 & 0.0484 \\
				T-CE      & 0.0197 & 0.0109 & 0.0263 & 0.0127 & 0.0427 & 0.0548 & 0.0584 & 0.0568 & 0.0550 & 0.0452 & 0.0669 & 0.0491 \\
				DeCA      & 0.0150 & 0.0082 & 0.0210 & 0.0099 & 0.0532 & 0.0605 & 0.0644 & 0.0624 & 0.0551 & 0.0450 & 0.0675 & 0.0493 \\
				DCF       & 0.0211 & 0.0116 & 0.0278 & 0.0135 & 0.0519 & 0.0603 & 0.0618 & 0.0613 & 0.0555 & 0.0454 & 0.0666 & 0.0492 \\
				PLD       & 0.0109 & 0.0061 & 0.0140 & 0.0070 &  0.0296& 0.0251& 0.0356&  0.0259           &  0.0207& 0.0170 &  0.0254& 0.0186 \\
				\midrule
				STAM   & 0.0259 & 0.0144 & 0.0334 & 0.0165 & 0.0673 & 0.0732 & 0.0794 & 0.0759 & 0.0547 & 0.0418 & 0.0660 & 0.0456 \\
				TFPS-STAM   & 0.0281 & 0.0156 & 0.0351 & 0.0175 & \underline{0.1098} & \underline{0.0882} & \underline{0.1382} & \underline{0.0973} & \underline{0.0707} & \underline{0.0581} & \underline{0.0863} & \underline{0.0634} \\
				\midrule
				TFPS (Ours) & \textbf{0.0308} & \textbf{0.0173} & \textbf{0.0388} & \textbf{0.0195} & \textbf{0.2153} & \textbf{0.2300} & \textbf{0.2485} & \textbf{0.2395} & \textbf{0.0771} & \textbf{0.0590} & \textbf{0.0915} & \textbf{0.0638} \\
				\bottomrule
			\end{tabular}
		}
	}
	
	\vspace{-0.1cm}

\end{table*}

\subsubsection{Baselines}

Existing negative sampling baselines, whether focusing on positive or negative samples, overlook temporal information, limiting their ability to capture current user preferences. To validate the effectiveness of TFPS in incorporating temporal information into negative sampling, we compare it against six negative sample-focused negative sampling methods RNS~\cite{21}, DNS~\cite{15}, DNS(M,N)~\cite{26}, MixGCF~\cite{10}, DENS~\cite{13}, AHNS~\cite{16} and five positive sample-focused denoising recommendation methods T-CE~\cite{44}, R-CE~\cite{44}, DeCA~\cite{45}, DCF~\cite{48}, PLD~\cite{57}. Additionally, we also compare TFPS with STAM~\cite{51} (a spatiotemporal aggregation method for GNN-based CF), to demonstrate the differential impact of data-level versus model-level temporal information integration on recommendation model performance. We further evaluate \textit{TFPS-STAM}, which combines data-level temporal enhancement (TFPS) with model-level temporal aggregation (STAM). Unless otherwise specified, all baselines use the same hyperparameter settings as their original code implementations.

\textit{Remark.} To ensure a fair comparison, the pool size for all methods that use candidate negative sample pool techniques (DNS, MixGCF, DENS, AHNS) is set to 10. All experiments take data leakage into account, and data contained in the validation and test set are removed from the constructed positive sample set. Our source code is available at https://anonymous.4open.science/r/TFPS-03D2.

\begin{figure}[t]
	\centering
	\includegraphics[width=\linewidth]{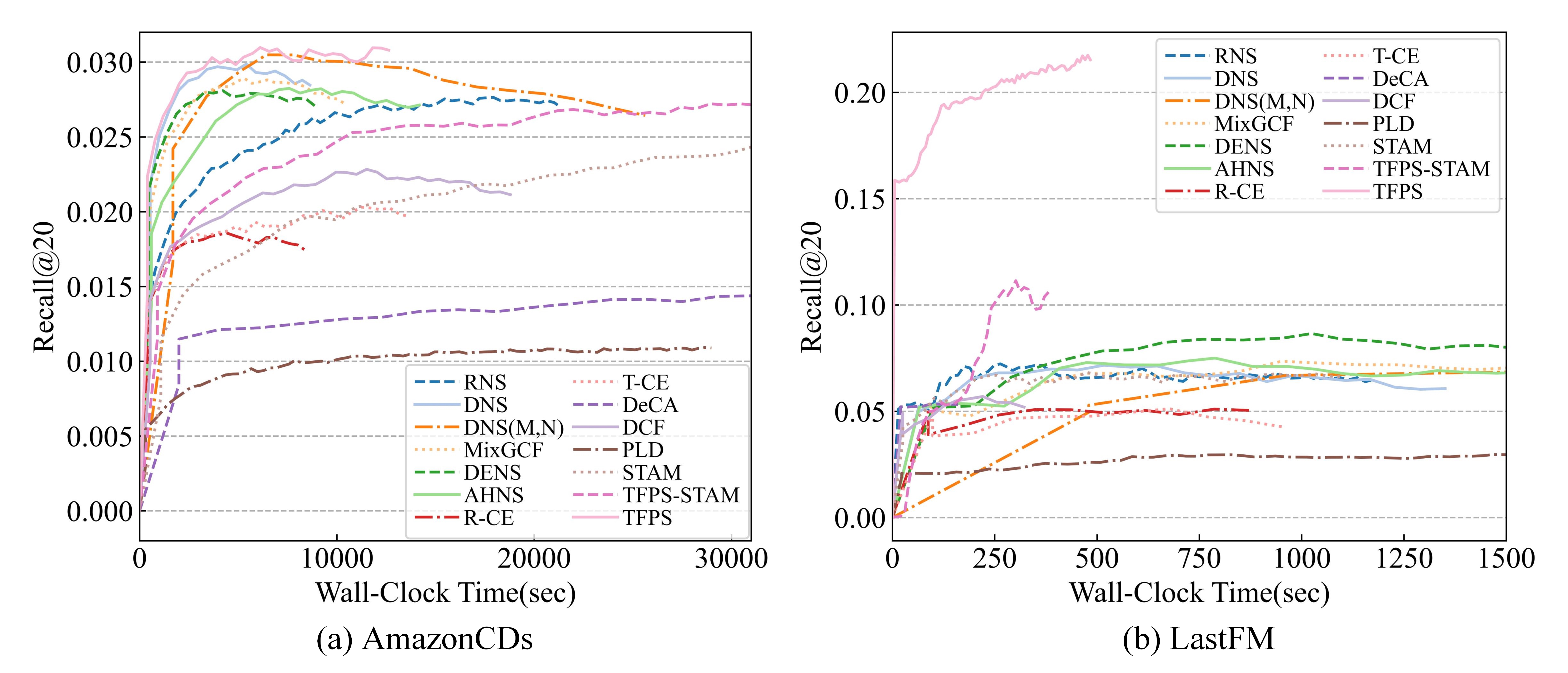}
	
	\vspace{-0.1cm}
	
	\caption{Recall@20 vs. wall-clock time (in seconds).}
	
	\vspace{-0.3cm}
	
\end{figure}

\subsubsection{Parameter settings}

All experiments are conducted on LightGCN \cite{32}. The initial vectors are initialized using the Xavier method \cite{41} with a dimension of 64, Adam optimizer \cite{42} is used with a learning rate of 0.001, mini-batch size is set to 2048, L2 regularization parameter to 0.0001, and the model layers is set to 3. We run TFPS with five different random seeds and report the averaged results.

\subsection{Performance Comparison (RQ1)}

We present the results of the comparison between TFPS and the other twelve methods across three datasets in Table II. The best results are highlighted in bold and the second-best results are underlined. We observe the following experimental phenomena:

[1] TFPS achieves the best results across all metrics on the different datasets. We attribute this to the following reasons: (1) The time decay model in TFPS assigns higher weights to users' recent interactions, and graph filtration can flexibly extract high-weighted subgraphs, effectively capturing users' recent preferences. (2) By focusing on constructing the positive sample set in the negative sampling strategy, TFPS allows the model to learn more accurately from the positive sample supervision signals, improving the timeliness and accuracy of recommendation model.

[2] STAM underperforms TFPS because it relies solely on positional embeddings to model the temporal information of user interactions, neglecting temporal intervals, which makes it challenging for model to distinguish between current and outdated preferences. In contrast, TFPS captures temporal interval information using a time-decay model, and constructs a positive sample set through graph filtration and layer enhancement, explicitly guiding model to learn current preferences. TFPS-STAM outperforms STAM, which can also be attributed to the explicit guidance provided by the positive sample set constructed by TFPS. 
Notably, TFPS-STAM performs worse than TFPS alone. TFPS-enhanced positives are more temporally concentrated and redundant, which weakens STAM’s modeling of informative temporal transitions and may partially undermine its position-based encoding. Consequently, STAM constructs a noisier adjacency matrix and yields lower accuracy.
Based on this finding, we conducted experiments to further explore the applicability of TFPS, revealing that it underperforms only when integrated with sequence-based temporal modeling methods. Integrating TFPS with other negative sampling strategies (see Fig.~6) and backbone (see Table III) significantly enhances its performance.

[3] These baselines exhibit significant performance differences across various datasets. For example, most negative-sample-focused baselines perform well on the AmazonCDs dataset but poorly on the Ta-Feng dataset; whereas most positive-sample-focused baselines shows the opposite result. We attribute this to differences in dataset characteristics (e.g., sparsity, noise levels). Notably, RNS outperforms other baselines in many cases. We speculate that this stems from challenges posed by user preference drift in timestamp-split datasets. Specifically, most negative-sample-focused baselines obtain negative samples based on local training set patterns (e.g., historical user behaviors), risking overfitting to early interactions. As user preferences evolve over time, this impairs the model’s accuracy in predicting future interactions. In contrast, RNS obtains negative samples without being confined to specific local patterns, better covering the dataset’s dynamic characteristics and enhancing model generalization. Most positive-sample-focused baselines filter noise based on statistical characteristics observed in specific training datasets, resulting in limited applicability. This leads to low denoising accuracy and consequently degrades model performance. Notably, PLD, the current state-of-the-art denoising method, performs particularly poorly on timestamp-split datasets. We attribute this to limitations in its positive sample resampling strategy. For instance, on the Lastfm dataset, resampling covers only 64.7\% of positive samples, classifying the remainder as noise, which significantly exacerbates the sparsity of positive supervision signals. Furthermore, we find that the average time interval between resampled positive samples and users’ most recent interactions reaches 140 days, severely impairing the model’s ability to capture users’ current preferences.

Besides effectiveness, we also compared the performance in terms of efficiency. We measured the time required for TFPS to construct positive sample sets on the Ta-Feng, LastFM, and AmazonCDs datasets, recording 22.10s, 3.03s, and 40.24s, respectively. These results align with the analysis in Section~III.E, which indicates that the time complexity of TFPS is linear with the number of interactions. Crucially, TFPS only needs to construct the positive sample set once before model training begins. In contrast, denoising recommendation methods T-CE, R-CE, DeCA, and DCF require frequent reconstruction of positive sample sets during the training process. Consequently, TFPS significantly outperform these denoising recommendation methods in terms of construction efficiency. Then, we compared the training efficiency of TFPS with that of the baselines, and the results are shown in Fig.~3. It can be observed that TFPS converges faster and achieves better performance.

\begin{figure}[t]
	\centering
	\includegraphics[width=\linewidth]{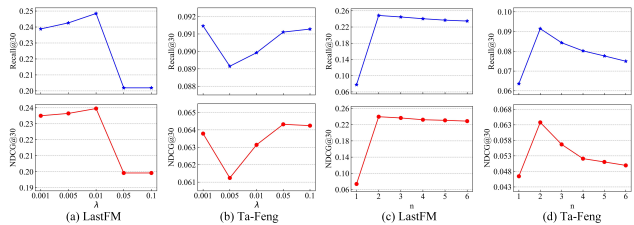}
	
	\caption{The impact of $\lambda$ and $n$ on Recall@30 and NDCG@30.}
	
	\vspace{-0.6cm}
	
\end{figure}

\subsection{Parameter Analysis and Ablation Study (RQ2)}

In this section, we analyze the impact of the parameters $\lambda$ and $n$ involved in TFPS on model performance. We search parameter $\lambda$ in range of \{0.001, 0.005, 0.01, 0.05, 0.1\}. Given that an excessively large $n$ may result in a more skewed positive sample frequency distribution and raise the risk of overfitting, we tune $n$ over $\{1,2,3,4,5,6\}$. When analyzing one parameter, the other parameters remain fixed.

Fig.~4(a),(b) show the effect of $\lambda$. As $\lambda$ increases, Recall@30 and NDCG@30 generally rise and then decline on the LastFM datasets. 
This is because a small $\lambda$ decays too slowly and allows high-layer subgraphs to mix in more historical-preference interactions, weakening the focus on current preferences. 
Conversely, when $\lambda$ is large, the time weights decay too quickly, which may cause the high-layer subgraph to miss informative recent interactions, reduce effective supervision, and degrade performance.
On the Ta-Feng dataset, a reverse trend is observed. We speculate this is because this dataset has a smaller time span and more focused user preferences, allowing a small $\lambda$ to effectively capture users' current preferences. As the $\lambda$ increases, graph filtration gradually shifts the segmentation of these interactions from a coarse-grained level to a fine-grained level. Coarse-grained segmentation may prevent the model from accurately focusing on the user's current interests, thereby affecting the model performance. Overall, TFPS remains superior to the strongest baseline for a broad range of $\lambda$ choices on both Recall@30 and NDCG@30.

Fig.~4(c),(d) show the effect of $n$. As $n$ increases, Recall@30 and NDCG@30 generally follow a trend of rising initially and then declining. This is because when $n$ is small, the weight intervals in graph filtration are larger, leading to larger subgraph sizes. High-layer subgraphs may include interactions reflecting the user's historical interests, thus affecting the model's ability to capture the user's current preferences. When $n$ is large, the weight intervals become smaller, resulting in smaller subgraph sizes, which may cause high-layer subgraphs to miss some interactions reflecting the user's current interests. Moreover, a larger $n$ also lead to model overfitting, reducing the model performance. Empirically, $n=2$ offers a good trade-off between capturing recent temporal preferences and avoiding overly concentrated positives that cause overfitting.

\begin{figure}[t]
	\centering
	\includegraphics[width=\linewidth]{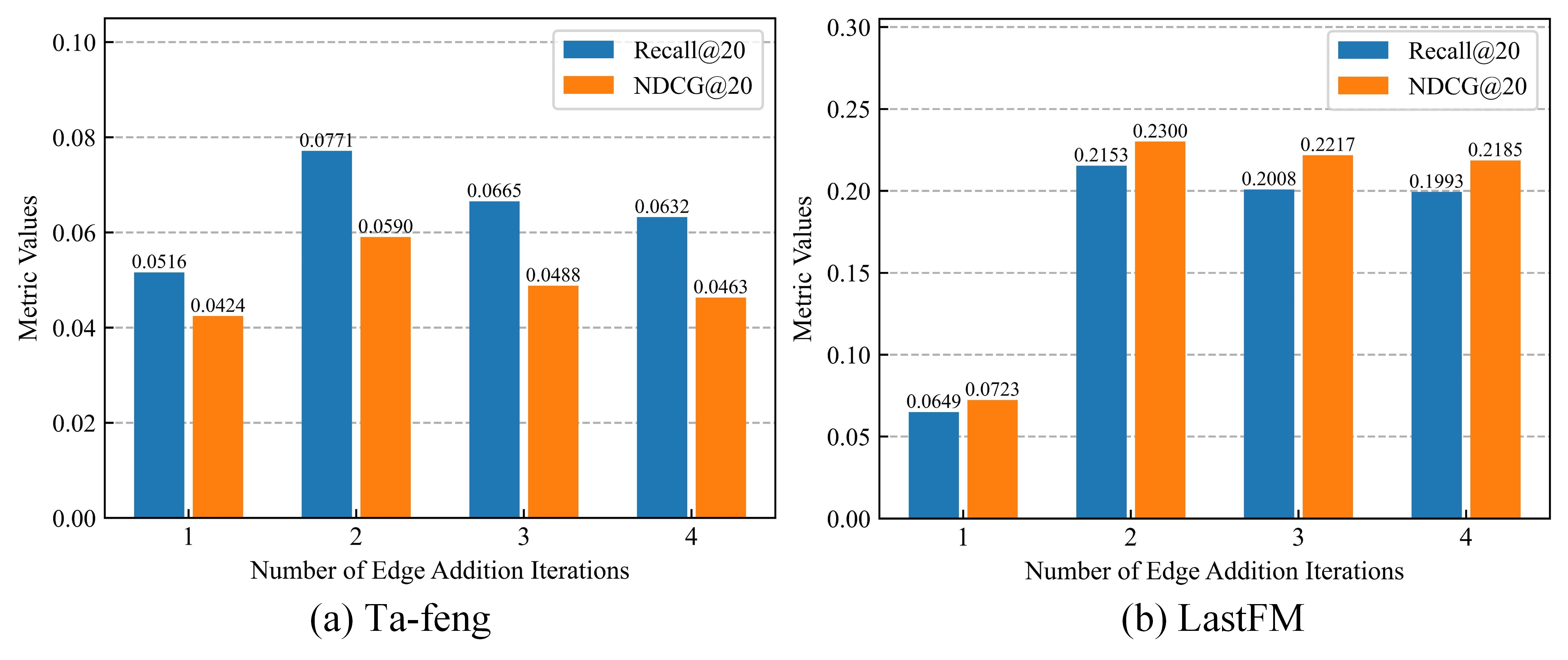}
	
	\vspace{-0.1cm}
	
	\caption{Impact of Positive Sample Distribution.}
	
	\vspace{-0.5cm}
	
\end{figure}

Considering the potential issue of positive sample distribution imbalance caused by TFPS's layer-enhancement strategy for constructing the positive sample set, we investigated the impact of the positive sample distribution on model performance by controlling the number of edge addition iterations. When the iterations is 1, the positive sample set remains the original set; when it is 2, it corresponds to TFPS's layer-enhancement strategy on Ta-feng and Lastfm datasets; and when it exceeds 2, the positive sample set is further augmented with additional positive samples reflecting users' current preferences. 
As shown in Fig.~5, moderate iterations are beneficial, whereas excessive repetition may over-concentrate the positive frequency distribution and reduce generalization, validating the effectiveness of TFPS’s linear layer-enhancement.

\begin{figure}[t]
	\centering
	\includegraphics[width=\linewidth]{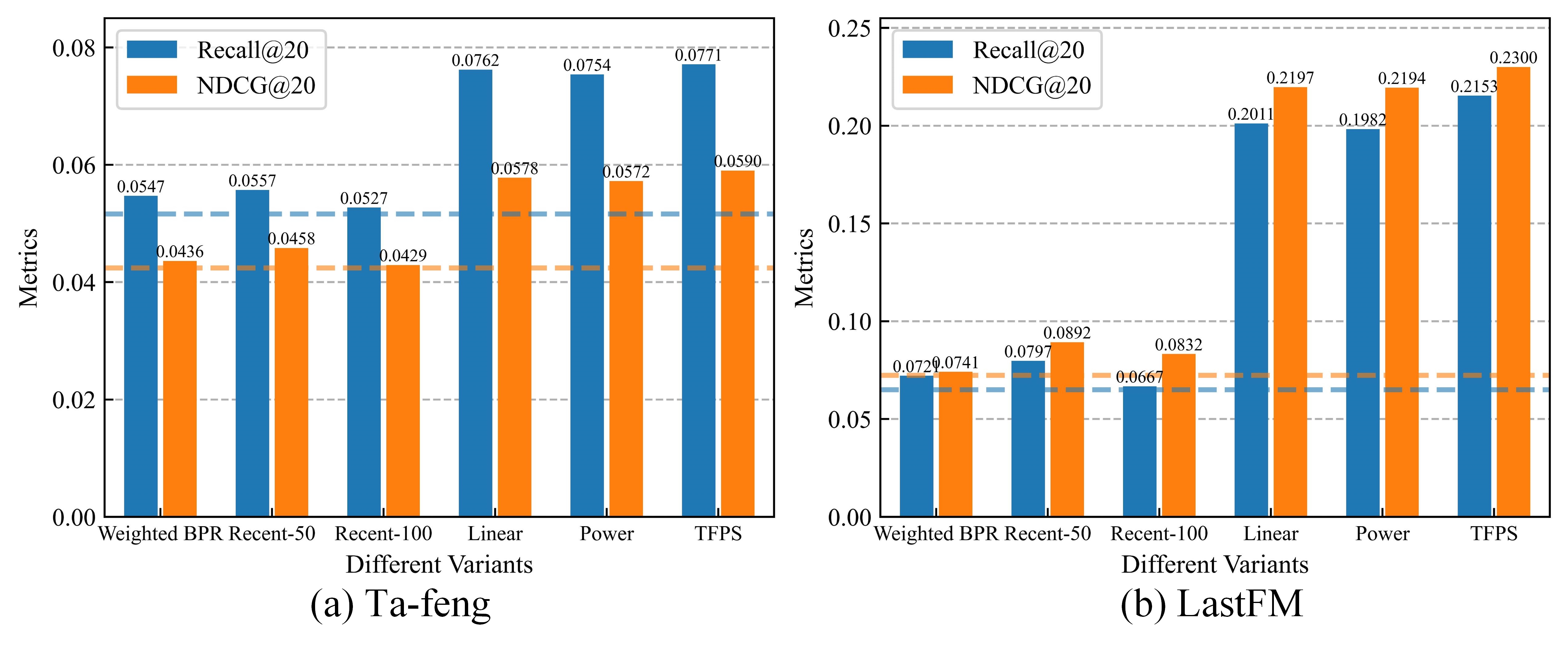}
	
	\vspace{-0.1cm}
	
	\caption{Impact of different temporal strategies.}
	
	\vspace{-0.2cm}
	
\end{figure}

\begin{figure}[t]
	\centering
	\includegraphics[width=\linewidth]{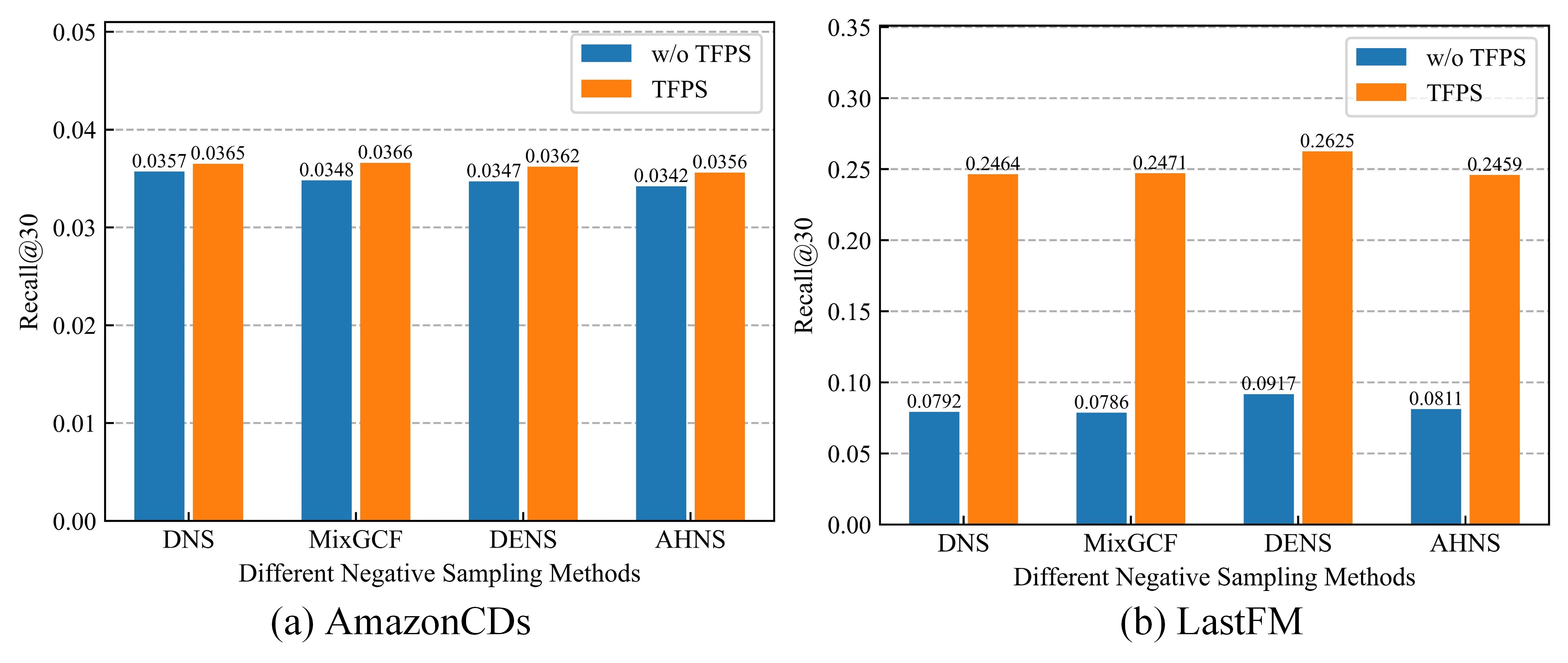}
	
	\vspace{-0.1cm}
	
	\caption{Performance comparison of integrating TFPS with different negative sampling methods.}
	
	\vspace{-0.4cm}
	
\end{figure}

To further verify the effectiveness of the exponential time-decay function and the layer-enhancement mechanism in TFPS, we compare TFPS with the following variants: 
(1) \textit{Weighted-BPR}, which applies time-decay weights directly to each positive instance in the BPR loss;
(2) \textit{Recent-$k$}, which constructs positives using only the $k$ most recent interactions of each user; and
(3) \textit{Linear} and \textit{Power}, which replace the exponential decay with linear and power-law decay functions, respectively.
For all variants, hyperparameters are tuned to ensure a fair comparison.

As shown in Fig.~6, the blue and orange dashed horizontal lines denote the backbone (LightGCN) performance in terms of Recall@20 and NDCG@20, respectively. Exponential decay consistently outperforms linear and heavy-tailed power-law decays, which aligns with the intuition that exponential decay can impose a stronger recency bias for capturing time-evolving preferences. 
Moreover, recency-aware strategies outperform LightGCN, indicating that emphasizing recent interactions helps the model learn users' current preferences and better predict future behaviors. TFPS further improves upon these recency-aware baselines. 
We attribute the gain to the linear layer-enhancement mechanism: it amplifies interactions that better reflect current preferences when constructing the positive sample set, while still retaining those representing older preferences. This design ensures that current preferences remain the dominant learning signal while leveraging informative cues from past interactions (e.g., long-term stable interests and latent preference patterns), which helps the model learn a more comprehensive representation of user interests and thereby make more accurate future preference predictions.

\subsection{Application Study (RQ3)}

In this section, we further examine the applicability of TFPS by integrating its constructed positive sample sets with various negative sampling methods and implicit CF models.

Fig.~7 shows the performance comparison of integrating TFPS into other negative sampling methods on the AmazonCDs and LastFM datasets. It can be seen that TFPS improves the effectiveness of negative sampling methods, proving its strong applicability. Additionally, we observe that integrating TFPS into these methods results in varying degrees of performance improvement for the model. We speculate that this is because most negative sampling methods typically obtain hard negative samples by combining positive samples (e.g., sampling negative samples based on similarity to positive samples, or constructing negative samples through mixup with positive samples). However, the positive sample set constructed by TFPS contains a higher proportion of recent user interaction, which reduces the diversity of negative samples, thus limiting further performance improvements of this model.

\begin{table}[!]
	\centering

	\setlength{\tabcolsep}{3.5pt} % 数值越大，列间距越宽
	
	\small
	
	\caption{Performance comparison integrated with MF.}

	\begin{tabular}{ccccc}
		\toprule
		Dataset      & \multicolumn{2}{c}{AmazonCDs} & \multicolumn{2}{c}{LastFM} \\ 
		\midrule
		Method        & Recall@30        & NDCG@30  & Recall@30        & NDCG@30  \\ 
		\midrule
		RNS         & \underline{0.0309}           & 0.0150         & 0.0868               & 0.0736        \\ 
		DNS           & 0.0305           & \underline{0.0154}         & 0.0783               & 0.0720        \\ 
		DNS(M,N)      & 0.0293           & 0.0148         & 0.0754               & 0.0693        \\ 
		DENS          & 0.0259           & 0.0118         & 0.0838               & 0.0748        \\ 
		AHNS          & 0.0253           & 0.0122         & 0.0793               & 0.0578        \\ \midrule
		R-CE       &   0.0274   &     0.0133       &     0.0904      &    0.0792  \\ 
		T-CE       &   0.0299         &   0.0146    &    0.0913        &   \underline{0.0813}   \\ 
		DeCA       &   0.0214         &   0.0101    &    0.0747        &   0.0634   \\ 
		DCF        &    0.0273        &   0.0133  &      \underline{0.0936}             &     0.0783       \\
		PLD &         0.0130&   0.0066&  0.0295&   0.0237  \\ \midrule
		TFPS          & \textbf{0.0336}           & \textbf{0.0167}         & \textbf{0.2242}               & \textbf{0.2080}        \\ 
		\bottomrule
	\end{tabular}
	
	\vspace{-0.2cm}
	
\end{table}

Table III shows the experimental results of TFPS and other baselines on the MF-based CF recommendation model \cite{30}. Among them, we omit MixGCF, STAM, TFPS-STAM, as these methods, designed based on graph neural network aggregation processes, are not applicable to MF-based CF models. It can be observed that the Recall@30 and NDCG@30 of TFPS are significantly higher than those of other baselines. This performance benefits from our data-centric approach, which constructs a high-quality positive sample set based on temporal information, explicitly guiding the CF recommendation model to capture users’ current preferences. This further demonstrates the strong applicability of TFPS. Additionally, PLD also exhibits particularly poor performance on this backbone, highlighting the limitations of its resampling strategy in capturing users’ current preferences, namely the limited coverage of sampled positive examples and the high proportion of outdated interactions.

\subsection{Comparison with Sequential Recommendation Models (RQ4)}

In this section, we provide a reference comparison with sequential recommendation models to further examine the ability of TFPS in capturing temporal information. Since sequential and implicit CF models differ in both task objectives and data modeling paradigms (see Section~II.B), this comparison is not a strict apples-to-apples evaluation. To align the evaluation, we treat the sequence embedding as an equivalent representation of the user embedding and assess whether the ground-truth item in the test set appears in the top-$K$ predictions.
We compare TFPS with six representative or state-of-the-art sequential recommendation models, including the RNN-based model (GRU4Rec \cite{60}), transformer-based models (BERT4Rec \cite{61}, LinRec \cite{62} and FEARec \cite{63}), and SSM-based models (ECHOMamba \cite{64} and SIGMA \cite{65}). The hyperparameter settings of these models are kept consistent with those used in the official implementation of \cite{65}\footnote{https://github.com/Applied-Machine-Learning-Lab/SIMGA}. As shown in Table IV, TFPS outperforms sequential recommendation models in most cases. This improvement can be attributed to the fact that TFPS explicitly models temporal interval information, whereas these sequential models only capture the interaction order. The above results demonstrate that TFPS effectively incorporates temporal information into negative sampling, thereby enhancing model performance.

\begin{table}[!]
	\centering
	
	\setlength{\tabcolsep}{3.5pt} % 数值越大，列间距越宽
	
	\small
	
	\caption{Performance comparison between TFPS and sequential recommendation models.}
	
	\begin{tabular}{ccccc}
		\toprule
		Dataset      & \multicolumn{2}{c}{Ta-feng} & \multicolumn{2}{c}{LastFM} \\ 
		\midrule
		Method        & Recall@30        & NDCG@30  & Recall@30        & NDCG@30  \\ 
		\midrule
		GRU4Rec       & 0.0736 & 0.0530 & 0.2020 & 0.1928 \\
		BERT4Rec    & 0.0795 & 0.0595 & 0.2044 & 0.1980 \\
		LinRec         & 0.0812 & 0.0606 & 0.2218 & 0.2169 \\
		FEARec         & 0.0755 & 0.0546 & 0.2173 & 0.2164 \\
		ECHOMamba & 0.0787 & 0.0568 & 0.1964 & 0.1920 \\
		SIGMA           & \underline{0.0834} & \underline{0.0612} & \textbf{0.2502} & \underline{0.2314} \\
		\midrule
		TFPS           & \textbf{0.0915} & \textbf{0.0638} & \underline{0.2485} & \textbf{0.2395} \\
		\bottomrule
	\end{tabular}
	
	\vspace{-0.2cm}
	
\end{table}

\section{CONCLUSION}

In this paper, we focus on the construction of the high-quality positive sample set in negative sampling and propose a temporal filtration-enhanced method, TFPS. The method incorporates temporal information into the bipartite graph through a time decay model and extracts the user's current preferences from the graph using graph filtration. A layer enhancement mechanism is then used to build a high-quality positive sample set, guiding the model to more accurately learn the user's current interests based on these supervisory signals. Theoretical analysis and extensive experiments demonstrate that TFPS can achieve higher improvements in implicit CF recommendation models. This also provides a new research direction for negative sampling strategies, which can further improve the performance of implicit CF models.

%%
%% The next two lines define the bibliography style to be used, and
%% the bibliography file.
\bibliographystyle{ACM-Reference-Format}
\bibliography{citiation}

\end{document}